\newtheorem{theorem}{Theorem}
\begin{document}

\IEEELSENSarticlesubject{}

\title{Covariance Matching based robust Adaptive Cubature Kalman Filter}

\author{\IEEEauthorblockN{Mundla Narasimhappa\IEEEauthorrefmark{1}\IEEEauthorieeemembermark{1} and Sesham Srinu\IEEEauthorrefmark{2}}
\IEEEauthorblockA{\IEEEauthorrefmark{1}Department of Mechanical Engineering,
University of Surrey, Surrey, GU2 7XH, UK\\
\IEEEauthorrefmark{2}Department of Electrical Engineering, University of Nimibia, Nimibia, South\\
}}


\IEEEtitleabstractindextext{%
\begin{abstract}
This letter explores covariance matching based adaptive robust cubature Kalman filter (CMRACKF). In this method,  the  innovation sequence is used to determine the covariance matrix of measurement noise that can overcome the limitation of conventional CKF. In the proposed algorithm, weights are  adaptively adjust and used for updating the measurement noise covariance matrices online. It can also enhance the adaptive capability of the ACKF.  The simulation results are illustrated to evaluate the performance of the proposed algorithm.
\end{abstract}

\begin{IEEEkeywords}
Robust Adaptive Cubature Kalman, window method, target tracking, covariance matching.
\end{IEEEkeywords}}

\maketitle

\section{Introduction}

\IEEEPARstart{K}alman filter (KF) has been widely used state estimator in several fields; in navigation, target tracking \cite{grewal2014kalman},  robotics, control and signal processing\cite{cui2016adaptive}. In practice, the system  and measurement model of the KF are not known exactly. Moreover, the system becomes nonlinear and their noise models are vary with time \cite{grewal2014kalman}. Over years, several variants of non-linear estimation  methods; extended KF (EKF),  Unscented Kalman filter (UKF), and cubature Kalman filter (CKF) have been developed  for estimating state of a nonlinear system \cite{wan2000unscented, arasaratnam2009cubature}. The EKF performance is limited by the calculation of Jacobin errors. To improve the performance and  by addressing the EKF drawback, UKF has been developed based on Unscented Transform \cite{wan2000unscented}. In the UKF, a set of sigma points has used to approximate the mean and covariance of the state vectors. However, the accuracy  of UKF  has limited for higher order system \cite{meng2016covariance, cui2016adaptive}. CKF has been proposed to eliminate the limitation of UKF  and used for solving the higher order systems. The CKF has developed by  set of cubature points and used  to approximate the state vector in terms of posteriori mean and covariance. The CKF and UKF have the same accuracy, however better than the EKF filter \cite{arasaratnam2009cubature}.

In the CKF, a priori knowledge of the system and  noise models are usually unknown in real-time then  CKF becomes sub-optimal. Moreover, the uncertainties in the system and measurement models  may leads to large errors. Hence, CKF shows divergence behaviour under these conditions \cite{gao2015sage}. Recent past, many authors have been focusing on adaptive CKF and being addressed limitation on CKF that can applied in nonlinear system analysis with additive Gaussian noise \cite{meng2016covariance}. The innovation or residual based adaptive methods have been developed and followed by using covariance matching, Bayesian approach, multiple model estimation (MME) and  Maximum likelihood method \cite{mohamed}. In covariance matching based AKF \cite{mohamed}, in which window average method is used to estimate the noise covariances matrices \cite{xia2015state}. However, for improving the practicality of the adaptive CKF, robust methods are better by adaptively updating  the noise statistics online \cite{zhenbing2018adaptive, narasimhappa2016fiber}.

To the best of authors knowledge's, there is a limited work on robust adaptive cubature Kalman filter. This letter presents an improved covariance matching based robust adaptive cubature Kalman filter (CMRACKF) for addressing the outliers in the measurements. In the proposed algorithm,  moving average method (MAM) is used to adapt the noise statistics of innovation vector and the weights on each window are adjusted. The ability of the CMRACKF algorithm is to estimate and update the noise statistics online.

 The rest of the paper was organized as follows; The description of the traditional Cubature Kalman filter algorithm is presented in Section II. Section III provides the briefly on ACKF algorithm and then the proposed CMRACKF algorithm in Section IV. Simulation examples along with performance  analysis of the proposed algorithm is  given in Section IV. Section V presents the conclusions of the paper.

\section{Cubature Kalman filtering}
In nonlinear state estimation, Cubature Kalman Filter (CKF) is the better method has been widely used algorithm for solving the higher dimensions of the  stochastic systems. In the CKF, cubature points are required to approximate the state vector mean and error covariance followed by the spherical radial cubature criterion \cite{arasaratnam2009cubature}.  

Let us consider a discrete time stochastic nonlinear dynamic system and measurement equations:

\begin{equation}\label{eqn4.1}
 {\bf x_{k}}={\bf f({x_{k-1}},u_{k-1})}+{\bf w_{k-1}}
 \end{equation}
\begin{equation}\label{eqn4.2}
 {\bf z}_{k}={\bf h({\bf x_{k}})}+{\bf v_{k}}
 \end{equation}

where, $\bf x_{k}  \in R^{n}$ is the state vector, ${\bf u_{k}}  \bf \in R^{r}$ is the input vector,  ${\bf z_{k}} \bf \in R^{m}$ is the measurement vector at time ${k}$. ${\bf f({x}_{k-1})}, {\bf h({k})}$ are the nonlinear system dynamic  and measurement functions. The process and measurement noise are assumed to be white Gaussian noise with zero mean and finite variance, represented as ${\bf w_k}=N(0, {\bf Q_{k}})$ and  ${\bf v_k}=N(0, {\bf R_{k}})$, respectively. The step-wise implementation of CKF algorithm is follows \cite{arasaratnam2009cubature}

The detailed algorithm of the CKF is given as follows.\\
Step 1: Initialize  the state estimation ${\bf \hat x}_{0}$ and error covariance matrix $ {\bf \hat P}_{0} $  as
    \begin{equation}
      \label{eqn4.3}
    \begin{cases}
    {\bf \hat x_{0}}=E[{\bf  x_{0}}]\\
     {\bf \hat P_{0}}=E[({\bf x_{0}}-{\bf \hat x_{0}})({\bf x_{0}}-{\bf \hat x_{0}})^T]
    \end{cases}
    \end{equation}

A set of 2L cubature points are given by set $ [ {({\bf \chi_k})_i}  \quad { W_{i}}^{m} ] $, where ${({\bf \chi_k})_i}$ is the i-th cubature point and corresponding weights are represented as 
 
 \begin{equation}
  \label{eq:t}
  \begin{aligned}
     {{\bf \chi}_i} & =\sqrt{\bf L}[1]_{i},\\        
    {\bf W_{i}}^{m} &=\frac{1}{2L}, &i=1,2..2L\\
  \end{aligned}
\end{equation}
 
 where, [1](i)  denotes its i-th column vector of the the identity matrix.
 The steps involved in the predicted (time-update) and the measurement-update of the CKF are summarised in below.
 
Prediction:\\
(1) factorise and evaluate the cubature points:

 \begin{equation}
  \label{eq:t}
  \begin{aligned}
  {\bf \hat P_{k-1}} &=S_{k-1}S_{k-1}^T,\\
  {({\bf \chi_k})_i} &=S_{k}  {{\bf \xi}_i}+ {\bf \hat x_{k}}, &i=1,2..2L\\
  \end{aligned}
\end{equation}

(2) Propagate each sigma points through the  nonlinear system  as
\begin{equation} 
{({\bf X_{k-1}})_i}= {\bf f}({{({\bf \chi_k})_i}, \bf u_{k}}),  \quad   i=1,2 ...... 2L
 \label{stateSpaceForm1}
 \end{equation}
 
(3) Evaluate the predicted  state ${\bf \hat x_{k}}^{-}$ and state error covariance ${\bf \hat P_{k}}^{-}$  based on  the  transformed sigma points  as

 \begin{equation}
  \label{eq:t}
  \begin{aligned}
{\bf \hat x_{k-1}}^{-} & =\frac{1}{2L}\sum\limits_{i=1}^{2L}{({\bf X_{k-1}})_i},\\
{\bf \hat P_{k-1}}^{-} & =\frac{1}{2L}\sum\limits_{i=1}^{2L}{({\bf X_{k-1}})_i} {({\bf X_{k-1}})_i^\top} -{\bf \hat x_{k-1}}{\bf \hat x_{k-1}}^{{-}^\top}+{\bf Q_{k-1}}\\
  \end{aligned}
\end{equation}

Measurement update:\\
(1) factorise and evaluate the new cubature points:
 \begin{equation}
  \label{eq:t}
  \begin{aligned}
  {\bf \hat P_{k-1}} &=S_{k-1}S_{k-1}^T,\\
  {({\bf Z_{k-1}})_i} &=S_{k-1}  {{\bf \xi}_i}+ {\bf \hat x_{k-1}}, &i=1,2..2L\\
  \end{aligned}
\end{equation}

(2) Propagate new sigma points through the  nonlinear system  as
\begin{equation} 
{({\bf z_{k-1}})_i}= {\bf h}({{({\bf Z_{k-1}})_i}}),  \quad   i=1,2 ...... 2L
 \label{stateSpaceForm1}
 \end{equation}
 
(3) Evaluate the predicted measurements ${\bf \hat z_{k}}^{-}$ based on the  new sigma points  as

 \begin{equation}
  \label{eq:t}
  \begin{aligned}
{\bf \hat z_{k-1}}^{-} & =\frac{1}{2L}\sum\limits_{i=1}^{2L}{({\bf Z_{k-1}})_i},\\
\end{aligned}
\end{equation}

(4) calculate the  the cross and auto covariance of state and measurement values of ${\bf P_{xz,k}} $  and  ${\bf P_{zz,k}}$ are 
\begin{equation} 
  \label{eq:t}
  \begin{aligned}
{\bf P_{xz,k-1}}&=\frac{1}{2L}\sum\limits_{i=1}^{2L} {({\bf X_k})_i}  {({\bf Z_k})_i^\top}-{\bf \hat x_{k-1}}^{-} {\bf \hat z_{k-1}}^{{-}^\top}\\
{\bf P_{zz,k-1}}&=\sum\limits_{i=1}^{2L} {({\bf Z_{k-1}})_i} {({\bf Z_{k-1}})_i^\top}-{\bf \hat z_{k-1}}^{-} {\bf \hat z_{k-1}}^{{-}^\top}+{\bf R_{k-1}}
\end{aligned}
 \end{equation}
 (5) evaluate the Kalman gain and the updated stat and error covariance are
\begin{equation} 
  \label{eq:t}
  \begin{aligned}
{\bf K_{k}}&={\bf P_{xz,k-1}}{\bf P_{zz,k-1}^{-1}},\\
{\bf \hat x_{k}}& ={\bf \hat x_{k-1}^{-}}+{\bf K_{k}}({\bf z_{k}}-{\bf \hat z_{k-1}^{-}})\\
{\bf \hat P_{k}} &={\bf \hat P_{k-1}}^{-}-{\bf K_{k}}{\bf P_{zz,k-1}}{\bf K_{k}}^\top.
\end{aligned}
 \end{equation}
 
where,  (${\bf \upsilon_k}={\bf z_{k}}-{\bf \hat z_{k-1}^{-}}$) is the innovation sequence. ${\bf \hat P_{k}}$, is  the posterior state estimate of state. More detailed explanation of CKF can be found in  \cite{arasaratnam2009cubature}
\section{Adaptive Cubature Kalman Filter}
To improve the optimality and solving divergence issue of CKF,  adaptive Kalman filters \cite {mohamed} have been developed for updating the ${\bf Q_{k}} $ and ${\bf R_{k}}$ online. An innovation based adaptive estimation (IAE) ACKF is the one widely used estimation method \cite {gao2015sage}.

\subsection{Innovation based adaptive estimation Adaptive cubature Kalman filtering (IAE-ACKF)}

To address the CKF limitation and solving the divergence solution, the ACKF has briefly presented \cite{xia2015state,narasimhappa2016fiber}. In the ACKF, the innovation sequence ($ {\bf \upsilon_{k}}={\bf z_{k}}-{ \bf h} {\bf \hat {x}^{-}_{k}}$) is  calculated between measured and predicted measurements difference. By taking covariance, i.e., $E({\bf \upsilon_{k}}{\bf \upsilon_{k}^\top})$, then theoretical covariance matrix of $ { \bf C_{\upsilon k}}$ is
\begin{equation}
 { \bf C_{\upsilon k}}={\bf h} {\bf \hat P^{-}_{k}}{\bf h^T}+{\bf {R}_{k}}
  \label{stateSpaceForm1}
  \end{equation}

The window average method and followed by the covariance matching principle in ACKF \cite{mohamed}, is used to estimate the covariance matrix of innovation sequence as
\begin{equation}
 {\bf \hat C_{\upsilon k}}=\frac {1}{N_{w}}\sum_{j=j0}^{k}{\bf \upsilon_{j}} {\bf \upsilon_{j}}^T
  \label{stateSpaceForm1}
  \end{equation}
 where,  $j_{0}={k}-{N_w}+1 $ is the first epoch.  ${\bf \upsilon_{j}}$ and  ${N_w}$  are  the innovation sequence and  window size respectively. If the window size is too small, the estimation of measurement noise covariance can be noisy  \cite{meng2016covariance}. 

\section{Proposed algorithm: CMRACKF}
 In the proposed method, the measurement noise covariance is matrix is estimated; (i) covariance matching principle (ii) estimated  measurements variance (see in equation (16)). At each epoch, the individual weights contribution to data samples of measurements and state estimator calculation by the estimated variance of measurements. In the CMRACKF, an adaptive tuning parameter also is used to control the disturbances due to the effect of the residual error measurements. 

\subsubsection{Adaption of measurement noise covariance matrix}
In the CMRACKF algorithm, the estimated  measurement noise covariance matrix (${\bf \hat R_{k}}$) is updated according to the innovation sequence as
\begin{equation}\label{eqn33}
{\bf \hat R}_{k-j}^{*}=\frac {1}{N_w}\sum_{j=j0}^{k} {\omega({\bf \sigma^2_{k-j}})}{\bf \upsilon_{j}}{\bf \upsilon_{j}}^T-{\bf h}{\bf \hat P_{k}^{-}}{\bf h}^T
  \end{equation}

 where, $ {\omega({\bf \hat \sigma^2_{k-j}})}$ is the weighted function of the measurement error is calculated as follows; 

\subsubsection{Calculation of Weight}
The estimated variance and corresponding weights of measurement are defined as  \cite{narasimhappa2016fiber}

\begin{equation}\label{eqn28}
\begin{aligned}
 {\bf \hat\sigma^2_{k-j}}&=\frac{{\bf \upsilon_{k}^T {\bf R_{k}{\bf \upsilon_{k}}}}}{\bf r_{k}},\\
  {\bf \omega({\bf \hat\sigma^2_{k-j}})} & =\frac{(\frac{1}{{\bf \hat\sigma^2_{k-j}}})}{({\sum_{j=1}^{N_w}{\frac{1}{{\bf \hat\sigma^2_{k-j}}}}})}
 \end{aligned}
  \end{equation}
  
subject to the sum of the weights  are equal to 1, i.e; $ \sum_{j=1}^{N_{w}}{\bf \omega({\bf \hat\sigma^2_{k-j}})}=1$. Where, ${\bf r_{k}}$ denotes the number of measurements at each epoch $k$ \cite{gao2015sage}. In the CMRACKF algorithm, the estimated covariance of innovation sequence is updated by 
\begin{equation}
 {\bf \hat C_{\bf \upsilon k}}=\frac {1}{N_{w}}\sum_{j=j0}^{k}{\omega({\hat \sigma^2_{k-j}})}{\bf \upsilon_{j}} {\bf \upsilon_{j}^\top}
  \label{stateSpaceForm1}
  \end{equation}

where  ${\omega(.)}$,  $\bf \upsilon_{j}$ and  ${\bf \hat C_{\bf \upsilon k}} $ are weight function,  innovation sequence and its covariance matrix, respectively.\\

%
\begin{theorem}
Suppose the measurement and system noise statistics are two noise control parameters are very small in considered  window size  of  ${N_w}$. Then, a novel noise statistic estimator can be developed and derived as

 
 \begin{equation}\label{eqn28}
{\bf \hat R_{k-j}^{*}}=\frac{1}{N_w}\sum\limits_{j=0}{^{N}} {\omega({\bf \sigma^2_{k-j}})} {\bf \upsilon_{j}}  {\bf \upsilon_{j}^\top}-{\bf h_{k-j}} {\bf \hat P_{k-j}}  {\bf h_{k-j}^\top}  
\end{equation}

  \begin{gather}
{\bf \hat Q}_{k-j}^{*} =\frac{1}{{N_w}}[\sum\limits_{j=0}^{N} {\omega({\bf \sigma^2_{k-j}})} {\bf \upsilon_{j}}  {\bf \upsilon_{j}}^\top+\sum\limits_{j=0}^{N} {\omega({\bf \sigma^2_{k-j}})} {\bf \eta_{j}}  {\bf \eta_{j}}^\top ]\\ \notag -{\bf h_{k-j}} [\frac{1}{2L}\sum\limits_{i=1}^{2L}{({\bf X_{k-1}})_i} {({\bf X_{k-1}})_i^\top} -{\bf \hat x_{k-1}}{\bf \hat x_{k-1}}^{{-}^\top}+  {\bf \hat P_{k-j}} ] {\bf h_{k-j}^\top}
\end{gather}
 
 \end{theorem}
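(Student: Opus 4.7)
The plan is to derive both estimators by matching an empirical (weighted) sample covariance of the innovation sequence, and of an auxiliary state-residual sequence, to their theoretical counterparts, then solving algebraically for $\mathbf{R}_{k}$ and $\mathbf{Q}_{k}$. The small-noise hypothesis over the window will be invoked to justify treating $\mathbf{\hat P}_{k-j}$, $\mathbf{h}_{k-j}$ and the weights as slowly varying across the $N_w$ samples, so that the matching equation at epoch $k$ can be rearranged in closed form.

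First I would handle the measurement-noise estimator. I would start from the theoretical innovation covariance in equation (13), namely $\mathbf{C}_{\upsilon k}=\mathbf{h}\,\mathbf{\hat P}_{k}^{-}\mathbf{h}^{\top}+\mathbf{R}_{k}$, and substitute the weighted empirical covariance from equation (17), $\mathbf{\hat C}_{\upsilon k}=\tfrac{1}{N_w}\sum_{j} \omega(\hat\sigma^{2}_{k-j})\,\boldsymbol{\upsilon}_{j}\boldsymbol{\upsilon}_{j}^{\top}$, for the left-hand side. Isolating $\mathbf{R}_{k}$ and relabelling to a sliding window of length $N_w$ yields equation (18) directly. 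Because the weights sum to one and the small-noise assumption keeps $\mathbf{h}_{k-j}\mathbf{\hat P}_{k-j}\mathbf{h}_{k-j}^{\top}$ approximately constant across the window, the subtraction term is well-defined.

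Next I would derive the process-noise estimator. The key is to introduce a state residual sequence $\boldsymbol{\eta}_{j}=\mathbf{\hat x}_{j}-\mathbf{\hat x}_{j}^{-}$, whose outer product, together with the innovation outer product, contains the contribution of $\mathbf{Q}_{k}$ through the prediction covariance update from equation (7). Taking the weighted average of $\boldsymbol{\upsilon}_{j}\boldsymbol{\upsilon}_{j}^{\top}+\boldsymbol{\eta}_{j}\boldsymbol{\eta}_{j}^{\top}$ and equating it to the theoretical sum $\mathbf{h}_{k-j}(\mathbf{\hat P}_{k-1}^{-}+\mathbf{\hat P}_{k-j})\mathbf{h}_{k-j}^{\top}+\mathbf{Q}_{k-j}$ lets me solve for $\mathbf{\hat Q}_{k-j}^{*}$; substituting the cubature expression for $\mathbf{\hat P}_{k-1}^{-}$ from equation (7) then reproduces equation (19).

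The main obstacle will be the process-noise derivation, not the measurement one. Two issues need careful handling: first, justifying that the cross terms $\mathbb{E}[\boldsymbol{\upsilon}_{j}\boldsymbol{\eta}_{j}^{\top}]$ and the cross terms between successive residuals are negligible under the stated small-noise assumption over $N_w$, so that the weighted sum cleanly separates into $\upsilon\upsilon^{\top}$ and $\eta\eta^{\top}$ contributions; and second, keeping the indexing consistent between the prior-to-$k$ quantities (e.g.\ $\mathbf{\hat P}_{k-1}^{-}$ inside the cubature sum) and the per-epoch sliding-window quantities ($\mathbf{\hat P}_{k-j}$, $\mathbf{h}_{k-j}$). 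Once both bookkeeping issues are resolved, the algebraic inversion is straightforward and the weighted formulas (18) and (19) follow.
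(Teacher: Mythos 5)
Your treatment of the measurement-noise estimator (18) is essentially the paper's own argument: match the theoretical innovation covariance $\mathbf{h}\,\mathbf{\hat P}_{k}^{-}\mathbf{h}^{\top}+\mathbf{R}_{k}$ against the weighted windowed sample covariance of $\boldsymbol{\upsilon}_{j}\boldsymbol{\upsilon}_{j}^{\top}$ and solve for $\mathbf{R}_{k}$; the only cosmetic difference is that you quote the theoretical covariance (13) while the paper re-derives it by expanding $E[\boldsymbol{\upsilon}_{k}\boldsymbol{\upsilon}_{k}^{\top}]$ through $\boldsymbol{\upsilon}_{k}=\mathbf{h}(\mathbf{x}_{k}-\mathbf{\hat x}_{k}^{-})+\mathbf{v}_{k}$. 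That half stands.

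The process-noise half has a genuine gap, and it originates in your choice of residual. You set $\boldsymbol{\eta}_{j}=\mathbf{\hat x}_{j}-\mathbf{\hat x}_{j}^{-}$, a state-space quantity of dimension $n$. But in the target formula (19) the term $\boldsymbol{\eta}_{j}\boldsymbol{\eta}_{j}^{\top}$ is added to $\boldsymbol{\upsilon}_{j}\boldsymbol{\upsilon}_{j}^{\top}$ inside a single bracket, so $\boldsymbol{\eta}_{j}$ must be an $m$-dimensional measurement-space object; the paper accordingly defines it as the measurement residual $\boldsymbol{\eta}_{k}=\mathbf{z}_{k}-\mathbf{\hat z}_{k}$ (its equation (31)). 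With your definition the bracketed sum mixes $m\times m$ and $n\times n$ matrices, and the matching identity you propose, equating the weighted average of $\boldsymbol{\upsilon}_{j}\boldsymbol{\upsilon}_{j}^{\top}+\boldsymbol{\eta}_{j}\boldsymbol{\eta}_{j}^{\top}$ to $\mathbf{h}(\mathbf{\hat P}^{-}+\mathbf{\hat P})\mathbf{h}^{\top}+\mathbf{Q}$, does not hold: $E[(\mathbf{\hat x}_{k}-\mathbf{\hat x}_{k}^{-})(\mathbf{\hat x}_{k}-\mathbf{\hat x}_{k}^{-})^{\top}]=\mathbf{K}_{k}\mathbf{P}_{zz,k}\mathbf{K}_{k}^{\top}$, which contains no isolated $\mathbf{Q}$ contribution and is not of the form $\mathbf{h}(\cdot)\mathbf{h}^{\top}$, and a bare additive $\mathbf{Q}$ outside the $\mathbf{h}(\cdot)\mathbf{h}^{\top}$ sandwich is itself dimensionally inconsistent unless $m=n$. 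The paper's route is different: it forms the difference $\boldsymbol{\upsilon}_{k}-\boldsymbol{\eta}_{k}$, expands $E[(\boldsymbol{\upsilon}_{k}-\boldsymbol{\eta}_{k})(\boldsymbol{\upsilon}_{k}-\boldsymbol{\eta}_{k})^{\top}]$ while asserting that the cross terms $E[\boldsymbol{\upsilon}_{k}\boldsymbol{\eta}_{k}^{\top}]$ vanish, arrives at $\mathbf{h}(\mathbf{\hat P}_{k}+\mathbf{\hat P}_{k}^{-})\mathbf{h}^{\top}$, and only then exposes $\mathbf{Q}_{k-1}$ by substituting the cubature prediction (7), $\mathbf{\hat P}_{k-1}^{-}=\frac{1}{2L}\sum_{i}(\mathbf{X}_{k-1})_{i}(\mathbf{X}_{k-1})_{i}^{\top}-\mathbf{\hat x}_{k-1}\mathbf{\hat x}_{k-1}^{-\top}+\mathbf{Q}_{k-1}$, which is exactly where the $-\mathbf{h}\bigl[\frac{1}{2L}\sum_{i}(\mathbf{X}_{k-1})_{i}(\mathbf{X}_{k-1})_{i}^{\top}-\cdots\bigr]\mathbf{h}^{\top}$ block in (19) comes from. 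You correctly anticipate that the cross terms and the window indexing are the delicate points, but without the measurement-space residual and the difference construction your plan cannot reproduce (19).
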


\begin{proof} Let us assume that the window width is ${N_w}$ and there are ${N_w}$ measurements within $t_{k-N}$ to $  t_{k}$. Suppose the measurement noise statistics is very small variations within window width.
The innovation is described as
 \begin{equation}
 {\bf \upsilon_{k}}={\bf z_{k}}-{\bf \hat z_{k}^{-}}
  \label{stateSpaceForm1}
  \end{equation}
and substituting the measurement equation in (2) into (21), we have

  \begin{equation}
 {\bf \upsilon_{k}}={\bf h(.)}({\bf {x}_{k}}-{\bf \hat {x}^{-}_{k}})+{\bf v_k}
  \label{stateSpaceForm1}
  \end{equation}
Define the predicted and estimated state error are 
 \begin{equation}
  \label{eq:t}
  \begin{aligned}
{\bf \Delta \hat {x}^{-}_{k}}&={\bf {x}_{k}}-{\bf \hat {x}^{-}_{k}},  {\bf \hat P_{k}^{-}}=E[{\bf \Delta \hat {x}^{-}_{k}} {\bf \Delta \hat {x}^{-}_{k}}^\top]\\
{\bf \Delta \hat {x}_{k}}&={\bf {x}_{k}}-{\bf \hat {x}_{k}},  {\bf \hat P_{k}}=E[{\bf \Delta \hat {x}_{k}} {\bf \Delta \hat {x}_{k}}^\top]\\
  \end{aligned}
\end{equation}  
By taking the sample mean and the covariances for a limited number of sample of the innovation sequence is   
 \begin{equation}
  \label{eq:t}
  \begin{aligned}
{\bf\bar \upsilon_{k}}&=\frac{1}{{N_w}}\sum\limits_{j=1}^{{N_w}} {\bf \upsilon_{k-j}},\\
E[{\bf \upsilon_{k}} {\bf \upsilon_{k}}^\top]&= \frac{1}{{N_w}-1}\sum\limits_{j=1}^{{N_w}} {\bf \omega({\bf \hat\sigma^2_{k-j}})} ({\bf \upsilon_{k-j}}-{\bf\bar \upsilon_{k}}) ({\bf \upsilon_{k-j}}-{\bf\bar \upsilon_{k}}^\top)
  \end{aligned}
\end{equation}

From equation (22), we can apply the weighted sample covariances 

 \begin{equation}
  \label{eq:t}
  \begin{aligned}
E[{\bf \upsilon_{k}} {\bf \upsilon_{k}}^\top]&=E[{\bf h}({\bf {x}_{k}}-{\bf \hat {x}^{-}_{k}})+{\bf v_k}] [{\bf  h}({\bf {x}_{k}}-{\bf \hat {x}^{-}_{k}})+{\bf v_k})]^\top\\
&= E[{\bf h}({\bf \Delta \hat {x}^{-}_{k}})+{\bf v_k}] [{\bf  h}({\bf \Delta \hat {x}^{-}_{k}})+{\bf v_k})]^\top\\
&= {\bf  h}E[({\bf \Delta \hat {x}^{-}_{k}}) ({\bf \Delta \hat {x}^{-}_{k}})^\top] {\bf  h^\top} + E[ {\bf v_k} {\bf v}^\top_k] \\
&= {\bf h}{\bf \hat P_{k}} {\bf  h^\top} + {\bf  R_{k}} \\
&={\bf P_{zz,k}}
  \end{aligned}
\end{equation}

Thus, auto-covariance of innovation covariance is equal to 
 \begin{equation}
 \begin{split}
{\bf P_{zz,k}}&=E[{\bf \upsilon_{k}} {\bf \upsilon_{k}^\top}]\\
&=\sum\limits_{i=1}^{2L} {({\bf Z_{k-1}})_i} {({\bf Z_{k-1}})_i^\top}-{\bf \hat z_{k-1}}^{-} {\bf \hat z_{k-1}}^{{-}^\top}+{\bf R_{k}^{*}}\\
{\bf \hat R}_{k}^{*}&=E [{\bf \upsilon_{k}} {\bf \upsilon_{k}^\top}]-{\bf h_{k-j}}{\bf P_{k-j}} {\bf h_{k-j}^\top}
  \label{stateSpaceForm1}
  \end{split}
  \end{equation}

After applying  MAM, uses sample sequence of innovation as 
   \begin{equation}
{\bf \hat R}_{k-j}^{*}=\frac {1}{N_w}\sum_{j=j0}^{k} {\omega({\bf \sigma^2_{k-j}})}{\bf \upsilon_{j}}{\bf \upsilon_{j}^\top}-{\bf h_{k-j}}{\bf P_{k-j}} {\bf h_{k-j}^\top}
  \label{stateSpaceForm1}
  \end{equation}






The equation (27) completes the proof.

By taking the sample mean and the covariances for a limited number of sample of the innovation sequence is   
 \begin{equation}
  \label{eq:t}
  \begin{aligned}
{\bf\bar \upsilon_{k}}&=\frac{1}{{N_w}}\sum\limits_{j=1}^{{N_w}} {\bf \upsilon_{k-j}},\\
E[{\bf \upsilon_{k}} {\bf \upsilon_{k}}^\top]&= \frac{1}{{N_w}-1}\sum\limits_{j=1}^{{N_w}} {\bf \omega({\bf \hat\sigma^2_{k-j}})} ({\bf \upsilon_{k-j}}-{\bf\bar \upsilon_{k}}) ({\bf \upsilon_{k-j}}-{\bf\bar \upsilon_{k}}^\top)
  \end{aligned}
\end{equation}

From equation (22), we can apply the weighted sample covariances 

 \begin{equation}
  \label{eq:t}
  \begin{aligned}
E[{\bf \upsilon_{k}} {\bf \upsilon_{k}}^\top]&=E[{\bf h}({\bf {x}_{k}}-{\bf \hat {x}^{-}_{k}})+{\bf v_k}] [{\bf  h}({\bf {x}_{k}}-{\bf \hat {x}^{-}_{k}})+{\bf v_k})]^\top\\
&= E[{\bf h}({\bf \Delta \hat {x}^{-}_{k}})+{\bf v_k}] [{\bf  h}({\bf \Delta \hat {x}^{-}_{k}})+{\bf v_k})]^\top\\
&= {\bf  h}E[({\bf \Delta \hat {x}^{-}_{k}}) ({\bf \Delta \hat {x}^{-}_{k}})^\top] {\bf  h^\top} + E[ {\bf v_k} {\bf v}^\top_k] \\
&= {\bf h}{\bf \hat P_{k}} {\bf  h^\top} + {\bf  R_{k}} \\
&={\bf P_{zz,k}}
  \end{aligned}
\end{equation}

Thus, auto-covariance of innovation covariance is equal to 
 \begin{equation}
 \begin{split}
{\bf P_{zz,k}}&=E[{\bf \upsilon_{k}} {\bf \upsilon_{k}^\top}]\\
&=\sum\limits_{i=1}^{2L} {({\bf Z_{k-1}})_i} {({\bf Z_{k-1}})_i^\top}-{\bf \hat z_{k-1}}^{-} {\bf \hat z_{k-1}}^{{-}^\top}+{\bf R_{k}^{*}}\\
{\bf \hat R}_{k}^{*}&=E [{\bf \upsilon_{k}} {\bf \upsilon_{k}^\top}]-{\bf h_{k-j}}{\bf P_{k-j}} {\bf h_{k-j}^\top}
  \label{stateSpaceForm1}
  \end{split}
  \end{equation}

After applying  WMAM, uses sample sequence of innovation as 
   \begin{equation}
{\bf \hat R}_{k-j}^{*}=\frac {1}{N_w}\sum_{j=j0}^{k} {\omega({\bf \sigma^2_{k-j}})}{\bf \upsilon_{j}}{\bf \upsilon_{j}^\top}-{\bf h_{k-j}}{\bf P_{k-j}} {\bf h_{k-j}^\top}
  \label{stateSpaceForm1}
  \end{equation}

Similarly, for estimating the process noise covariance matrix, we consider both innovation and residual vector. The residual vector is 
 \begin{equation}
 {\bf \eta_{k}}={\bf z_{k}}-{\bf \hat z_{k}}
  \label{eq31}
  \end{equation}
Taking sample mean and weighed covariance  of the residual sequence is approximated by 
 
  \begin{equation}
  \label{eq:t}
  \begin{aligned}
{\bf\bar \eta_{k}}&=\frac{1}{{N_w}}\sum\limits_{j=1}^{{N_w}} {\bf \eta_{k-j}},\\
E[{\bf \eta_{k}} {\bf \eta_{k}}^\top]&= \frac{1}{{N_w}-1}\sum\limits_{j=1}^{{N_w}} {\bf \omega({\bf \hat\sigma^2_{k-j}})} ({\bf \eta_{k-j}}-{\bf\bar \upsilon_{k}}) ({\bf \eta_{k-j}}-{\bf\bar \eta_{k}}^\top)
  \end{aligned}
\end{equation} 
From equation (20) and equation (32), the difference error is   

 \begin{equation}
  \label{eq:t}
  \begin{aligned}
 ({\bf \upsilon_{k}}- {\bf \eta_{k}}) &= {\bf h(.)}({\bf {x}_{k}}-{\bf \hat {x}^{-}_{k}})\\
 ( {\bf \upsilon_{k}}- {\bf \eta_{k}})  ( {\bf \upsilon_{k}}- {\bf \eta_{k}})^\top &= [{\bf h(.)}({\bf {x}_{k}}-{\bf \hat {x}^{-}_{k}})] [{\bf h(.)} ({\bf {x}_{k}}-{\bf \hat {x}^{-}_{k}})]^\top
  \end{aligned}
\end{equation}

Take expectation
 \begin{equation}
  \label{eq:t}
  \begin{aligned}
E[( {\bf \upsilon_{k}}- {\bf \eta_{k}})  ( {\bf \upsilon_{k}}- {\bf \eta_{k}})^\top]&=E[{\bf \upsilon_{k}} {\bf \upsilon_{k}}^\top]- E[{\bf \upsilon_{k}} {\bf \eta_{k}}^\top] - E[{\bf \eta_{k}} {\bf \upsilon_{k}}^\top] + E[{\bf \eta_{k}} {\bf \eta_{k}}^\top]\\
&= E[{\bf \upsilon_{k}} {\bf \upsilon_{k}}^\top]+ E[{\bf \eta_{k}} {\bf \eta_{k}}^\top]\\
&= {\bf  h}E[({\bf \Delta \hat {x}^{-}_{k}}) ({\bf \Delta \hat {x}^{-}_{k}})^\top] {\bf  h^\top} + {\bf  h}E[({\bf \Delta \hat {x}_{k}}) ({\bf \Delta \hat {x}_{k}})^\top] {\bf  h^\top} \\
&= {\bf h}({\bf \hat P_{k}} + {\bf \hat P_{k}^{-}}){\bf  h^\top}  \\
 \end{aligned}
\end{equation}
Inserting equation (7), (17) into equation (34) and by solving the matrix equation, the estimated process noise covariance matrix can be obtained as
  \begin{gather}
{\bf \hat Q}_{k-j}^{*} =\frac{1}{{N_w}}[\sum\limits_{j=0}^{N} {\omega({\bf \sigma^2_{k-j}})} {\bf \upsilon_{j}}  {\bf \upsilon_{j}}^\top+\sum\limits_{j=0}^{N} {\omega({\bf \sigma^2_{k-j}})} {\bf \eta_{j}}  {\bf \eta_{j}}^\top ]\\ \notag -{\bf h_{k-j}} [\frac{1}{2L}\sum\limits_{i=1}^{2L}{({\bf X_{k-1}})_i} {({\bf X_{k-1}})_i^\top} -{\bf \hat x_{k-1}}{\bf \hat x_{k-1}}^{{-}^\top}+  {\bf \hat P_{k-j}} ] {\bf h_{k-j}^\top}
\end{gather}

\end{proof}

In the CMRACKF algorithm, measurement updated equations are adapted with measurement noise covariance matrix. In this step, filter gain and the auto covariance of innovation sequence is updated by ${\bf R}_{k}^{*}$. Similarly, ${\bf Q}_{k}^{*}$ is calculated based on residual vector as shown in equation (\ref{eq31}) the limited proof due to page limit of the letter. 

In the proposed CMRACK, the ${\bf P_{zz,k}}$ is the auto covariance of innovation sequence updated as
\begin{equation} 
{\bf P_{zz,k}}=\sum\limits_{i=0}^{2L}{\bf W_{i}^{c}}[{({\bf Z_k})_i}-{\bf \hat z_{k}}^{-}] [ {({\bf Z_k})_i}-{\bf \hat z_{k}}^{-}]^\top+{\bf R}_{k}^{*}
 \label{stateSpaceForm1}
 \end{equation}

 
where, $ {\bf \hat R}_{k}^{*}$, $ {\bf \hat Q}_{k}^{*}$, is updated measurement noise covariance matrix. Authors are shown the $ {\bf \hat R}_{k}^{*}$ adaption only in proposed approach.

\section{Numerical Simulation} 
In order to show the effectiveness of the proposed adaptive algorithm, target track example is considered \cite{gao2015sage}. The linear system and nonlinear measurement model of target tracking example can be expressed as follows \cite{wan2000unscented}:

\begin{equation}
      \label{eqn4.3}
    \begin{cases}
    {\bf x_{1}}(k+1)&={\bf x_{1}}(k)+T_s{\bf x_{3}}(k)+{\bf w_{1,k-1}}\\

    {\bf x_{2}}(k+1)&={\bf x_{2}}(k)+T_s( -{\bf k_{x}} {\bf x_{3}}^2)(k)+{\bf w_{2,k-1}}\\
     {\bf x_{3}}(k+1)&={\bf x_{3}}(k)+T_s{\bf x_{4}}(k)+{\bf w_{3,k-1}}\\   
    
    {\bf x_{4}}(k+1)&={\bf x_{4}}(k)+T_s( {\bf k_{y}} {\bf x_{3}}^2-g)(k)+{\bf w_{4,k-1}}\\
     \end{cases}
    \end{equation}
Where, the state, $x_{k}=[ x_{1,k} \quad x_{2,k} \quad x_{3,k} \quad x_{4,k}]$ are the vehicle position and velocity in x and y-plane and its constant coefficients. $T_s$ = 0.1 s, the step size.


The measurement model is represented as

\begin{equation}
      \label{eqn4.3}
    \begin{cases}
      {\bf z_{1}}(k)&= \sqrt{(\bf x_{1}({k})-s_{x})^2+(\bf x_{3}(k)-s_{y})^2} +   {\bf v_{1}}(k)  \\  {\bf z_{2}}(k)&= {\tan^{-1}( \frac{{\bf x_{3}}(k)-s_{y} }{ {\bf x_{1}}(k)-s_{x}}) } + {\bf v_{2}(k)} \\     \end{cases}
    \end{equation}

 Initial process ($Q$) and measurement noise covariance matrices($R$) are selected manually using trail and error method during simulations. The proposed algorithm is applied for checking the performance in  a target tracking example.  The positions of true trajectory and estimated target trajectory is shown in Figure 1. 

\begin{figure}[!ht]
\centering
\includegraphics[scale=0.55]{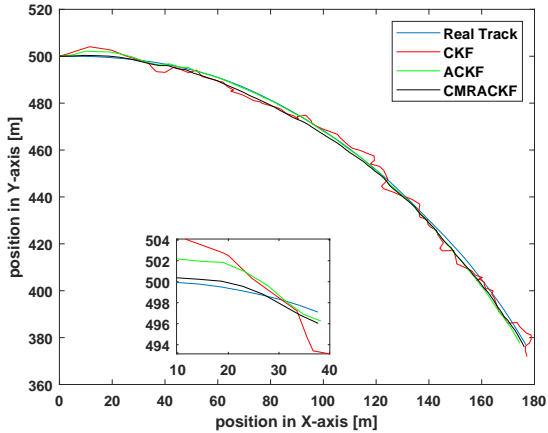}
 \caption{Target tracking result for position and its estimation.}
\label{fig:2}
\end{figure}

\begin{figure}[!ht]
\centering
\includegraphics[scale=0.55]{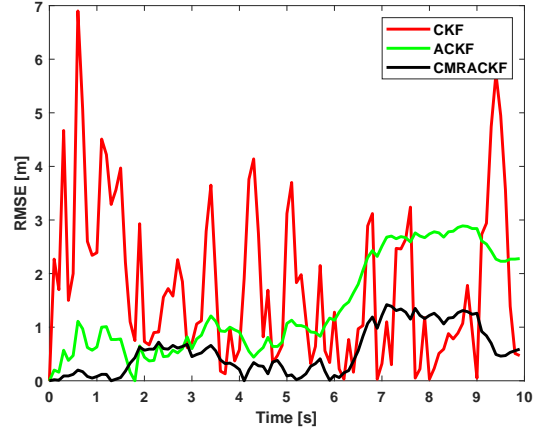}
 \caption{Position RMSE of considered algorithms.}
\label{fig:2}
\end{figure}

Figure 1 depicts the tracking accuracy with CKF, ACKF, proposed algorithm, it can be observed that  CKF can not track the true state ($x_{1,k}$). Nevertheless, the ACKF (green) and proposed algorithm (black) can track position $x_{1,k}$ accurately shown in zoomed version too.




Figure 2 shows the position RMSE error of CKF, ACKF and proposed method for target tracking. It can be observed that the CKF is having larger estimation error owing to fixed values of noise covariance, $Q$ and $R$. The average RMSE values of CKF, ACKF and the proposed algorithms are calculated as $1.79 m$, $1.40 m$  and $0.55 m$, respectively.  However, CMRACKF algorithm outperforms the CKF and ACKF.  In conclusion, the CMRACKF algorithm has better tracking ability.

\section{Conclusions}
The summary of the paper presents a novel CMRACKF algorithm.  In the proposed algorithm, moving window method based on covariance matching is applied to  evaluate the the measurement noise co-variance matrix that can adaptively adjust statistical noise parameters on-line. The noise covariance matrices is used, then feedback to the standard ACKF to overcome the limitation of ACKF. Numerical simulation results reveal that the  covariance matching framework RACKF can reduce the RMSE by  10\%  approximately and it can enhance the adaptive capability of ACKF algorithm.

\bibliographystyle{ieeetr}
\bibliography{Reference}

\begin{thebibliography}{10}

\bibitem{grewal2014kalman}
M.~S. Grewal and A.~P. Andrews, {\em Kalman filtering: Theory and Practice with
  MATLAB}.
\newblock John Wiley \& Sons, 2014.

\bibitem{cui2016adaptive}
M.~Cui, H.~Liu, and W.~Liu, ``An adaptive unscented {Kalman} filter-based
  adaptive tracking control for wheeled mobile robots with control constrains
  in the presence of wheel slipping,'' {\em International Journal of Advanced
  Robotic Systems}, vol.~13, no.~5, pp.~1--15, 2016.

\bibitem{wan2000unscented}
E.~A. Wan and R.~Van Der~Merwe, ``The unscented {Kalman} filter for nonlinear
  estimation,'' in {\em Adaptive Systems for Signal Processing, Communications,
  and Control Symposium}, pp.~153--158, IEEE, 2000.

\bibitem{arasaratnam2009cubature}
I.~Arasaratnam and S.~Haykin, ``Cubature {Kalman} filters,'' {\em IEEE
  Transactions on automatic control}, vol.~54, no.~6, pp.~1254--1269, 2009.

\bibitem{meng2016covariance}
Y.~Meng, S.~Gao, Y.~Zhong, G.~Hu, and A.~Subic, ``Covariance matching based
  adaptive unscented {Kalman} filter for direct filtering in {INS/GNSS}
  integration,'' {\em Acta Astronautica}, vol.~120, pp.~171--181, 2016.

\bibitem{gao2015sage}
S.~Gao, W.~Wei, Y.~Zhong, and A.~Subic, ``Sage windowing and random weighting
  adaptive filtering method for kinematic model error,'' {\em IEEE Transactions
  on Aerospace and Electronic Systems}, vol.~51, no.~2, pp.~1488--1500, 2015.

\bibitem{mohamed}
A.~Mohamed and K.~Schwarz, ``Adaptive {Kalman} filtering for {INS/GPS},'' {\em
  Journal of geodesy}, vol.~73, no.~4, pp.~193--203, 1999.

\bibitem{xia2015state}
B.~Xia, H.~Wang, Y.~Tian, M.~Wang, W.~Sun, and Z.~Xu, ``State of charge
  estimation of lithium-ion batteries using an adaptive cubature {Kalman}
  filter,'' {\em Energies}, vol.~8, no.~6, pp.~5916--5936, 2015.

\bibitem{zhenbing2018adaptive}
Q.~Zhenbing, Q.~Huaming, and W.~Guoqing, ``Adaptive robust cubature kalman
  filtering for satellite attitude estimation,'' {\em Chinese Journal of
  Aeronautics}, vol.~31, no.~4, pp.~806--819, 2018.

\bibitem{narasimhappa2016fiber}
M.~Narasimhappa, S.~L. Sabat, and J.~Nayak, ``Fiber-optic gyroscope signal
  denoising using an adaptive robust {Kalman} filter,'' {\em IEEE Sensors
  Journal}, vol.~16, no.~10, pp.~3711--3718, 2016.

\end{thebibliography}

\end{document}